\def\spacingset#1{\def\baselinestretch{#1}\small\normalsize}
\newcommand{\n}{{\cal N} \, }
\newcommand{\ba}{\begin{array}}
\newcommand{\ea}{\end{array}}
\newcommand{\be}{\begin{displaymath}}
\newcommand{\ee}{\end{displaymath}}
\newcommand{\ben}{\begin{equation}}
\newcommand{\een}{\end{equation}}
\newcommand{\bena}{\begin{eqnarray}}
\newcommand{\eena}{\end{eqnarray}}
\newcommand{\beqa}{\begin{eqnarray*}}
\newcommand{\enqa}{\end{eqnarray*}}
\newcommand{\f}{\frac}
\newcommand{\bc}{\begin{center}}
\newcommand{\ec}{\end{center}}
\newcommand{\bi}{\begin{itemize}}
\newcommand{\ei}{\end{itemize}}
\newcommand{\benu}{\begin{enumerate}}
\newcommand{\eenu}{\end{enumerate}}
\newcommand{\bdes}{\begin{description}}
\newcommand{\edes}{\end{description}}
\newcommand{\bt}{\begin{tabular}}
\newcommand{\et}{\end{tabular}}
\newcommand{\Phibf}{\mbox{${\bf \Phi}$}}
\newcommand \nubf{\mbox{\boldmath$\nu$\unboldmath}}
\newcommand \bbf{{\bf b}}
\newcommand \cbf{{\bf c}}
\newcommand \dbf{{\bf d}}
\newcommand \fbf{{\bf f}}
\newcommand \gbf{{\bf g}}
\newcommand \vbf{{\bf v}}
\newcommand \wbf{{\bf w}}
\newcommand \xbf{{\bf x}}
\newcommand \ybf{{\bf y}}
\newcommand \zbf{{\bf z}}
\newcommand \Abf{{\bf A}}
\newcommand \Bbf{{\bf B}}
\newcommand \Cbf{{\bf C}}
\newcommand \Dbf{{\bf D}}
\newcommand \Fbf{{\bf F}}
\newcommand \Ibf{{\bf I}}
\newcommand \Pbf{{\bf P}}
\newcommand \Wbf{{\bf W}}
\newcommand{\Rnum}{{\mathbb R}}
\newcommand{\zebf}{{\bf 0}}
\newcommand{\circlambda}{\mbox{$\Lambda$
             \kern-.85em\raise1.5ex
             \hbox{$\scriptstyle{\circ}$}}\,}
\def\Pr{\mathop{\rm Pr}}
\newtheorem{Theorem}{Theorem}
\begin{document}
%标题部分
\title{\vspace{-0.5cm}
Performance Analysis of $\ell_1$-synthesis with Coherent Frames}
%作者
\author{Yulong Liu, Shidong Li and Tiebin Mi}
%作者
\author{Yulong Liu\thanks{Yulong Liu is with the Institute of Electronics, Chinese Academy of Sciences, Beijing,
$100190$, China. Email: \{yulong3.liu@gmail.com\}.}, Shidong
Li\thanks{Shidong Li is with the Department of Mathematics, San
Francisco State University, San Francisco, CA 94132, USA. Email:
\{shidong@sfsu.edu\}.}, and Tiebin Mi\thanks{Tiebin Mi is with the
School of Information Sciences, Renmin University of China,
$100872$, China. Email: \{mitiebin@gmail.com\}.}}

\date{}

%\begin{document}
\maketitle
 \pssilent \setcounter{page}{1} \thispagestyle{empty}
%\vspace{-0.5cm}
% 开始摘要
%----------------------------------------------------
\begin{abstract}
\spacingset{1.8} Signals with sparse frame representations comprise
a much more realistic model of nature than that with orthonomal
bases. Studies about the signal recovery associated with such
sparsity models have been one of major focuses in compressed
sensing. In such settings, one important and widely used signal
recovery approach is known as $\ell_1$-synthesis (or Basis Pursuit).
We present in this article a more effective performance analysis
(than what are available) of this approach in which the dictionary
$\Dbf$ may be highly, and even perfectly correlated. Under suitable
conditions on the sensing matrix $\Phibf$, an error bound of the
recovered signal $\hat{\fbf}$ (by the $\ell_1$-synthesis method) is
established. Such an error bound is governed by the decaying
property of $\tilde{\Dbf}_{\text{o}}^*\fbf$, where $\fbf$ is the
true signal and $\tilde{\Dbf}_{\text{o}}$ denotes the optimal dual
frame of $\Dbf$ in the sense that
$\|\tilde{\Dbf}_{\text{o}}^*\hat{\fbf}\|_1$ produces the smallest
$\|\tilde{\Dbf}^*\tilde{\fbf}\|_1$ in value among all dual frames
$\tilde{\Dbf}$ of $\Dbf$ and all feasible signals $\tilde{\fbf}$.
This new performance analysis departs from the usual description of
the combo $\Phibf\Dbf$, and places the description on $\Phibf$.
Examples are demonstrated to show that when the usual analysis fails
to explain the working performance of the synthesis approach, the
newly established results do.

%开始关键词
{\bf Keywords:} Compressed sensing, coherent frames,
$\ell_1$-synthesis, optimal-dual-based $\ell_1$-analysis.

\end{abstract}
%---------------------------------------------------
\spacingset{1.9}
%%%%%%%%%%%%%%%%%%%%%%%%%%%%%%%%%
\section{Introduction}
%%%%%%%%%%%%%%%%%%%%%%%%%%%%%%%%%
Compressed sensing is a new data acquisition theory which allows
that sparse or compressible signals of interest can be recovered
from a small number of linear, non-adaptive, and usually randomized
measurements \cite{Candes2006b, Candes2006c, Donoho2006}. By now,
compressed sensing has attacked  abundant applications in signal and
image processing, see e.g., the two special issues
\cite{IEEESP2008}, \cite{ProcIEEE2010} and references therein.
Formally, one considers the following measurement model:
\begin{equation}
  \label{MeasurementModel} \ybf = \Phibf \fbf + \zbf,
\end{equation}
where $\Phibf$ is an $m \times n$ sensing matrix with $m \ll n$
(indicating some significant undersampling) and $\zbf \in \Rnum^{m}$
is a noise term modeling measurement error. The goal is to
reconstruct the unknown signal $\fbf \in \Rnum^{n}$ based on
available measurements $\ybf \in \Rnum^{m}$.

In standard compressed sensing scenarios, it is usually assumed that
the signal $\fbf$ has a sparse (or nearly sparse) representation in
an orthonormal basis. However, a large number of applications in
signal and image processing point to problems where $\fbf$ is sparse
with respect to an overcomplete dictionary or a frame rather than an
orthonormal basis, see, e.g., \cite{Mallat1993}, \cite{Chen2001},
\cite{Bruckstein2009}, and references therein. Examples include,
e.g., signal modeling in array signal processing (oversampled array
steering matrix), reflected radar and sonar signals (Gabor frames),
and images with curves (curvelets), etc. The flexibility of frames
is the key characteristic that empowers frames to become a natural
and concise signal representation tool. Therefore, it is highly
desirable to extend the compressed sensing methodology to redundant
dictionaries as apposed to orthnormal bases only, see, e.g.,
\cite{Rauhut2008}, \cite{Candes2011}, \cite{Liu2011}. In such sparse
frame representation setting, the signal $\fbf$ is now expressed as
$\fbf=\Dbf\xbf$ where $\Dbf\in \Rnum^{n\times d}$ ($n < d$) is a
matrix of frame\footnote{A set of vectors $\{\dbf_{k}\}_{k\in I}$ in
$\Rnum^{n}$ is a \textit{frame} of $\Rnum^{n}$ if there exist
constants $0<A\leq B<\infty$ such that
\begin{equation*}
  \forall \  \fbf \in \Rnum^{n}, \ \ \ \ A\|\fbf\|_{2}^{2} \leq \sum_{k\in I}|\langle\fbf, \dbf_{k}
  \rangle|^{2} \leq B\|\fbf\|_{2}^{2},
\end{equation*}
where numbers $A$ and $B$ are called lower and upper frame bounds,
respectively. More details about frames can be found in e.g.,
\cite{Christensen2003}, \cite{Han2007}.} vectors (as columns) that
are often rather coherent in applications, and $\xbf\in \Rnum^{d}$
is a sparse coefficient vector.  The linear measurements of $\fbf$
then can be written as
\begin{equation} \label{MeasurementModelofADx}
\ybf = \Phibf \Dbf\xbf + \zbf.
\end{equation}

Since $\xbf$ is assumed sparse, the standard way of recovering
$\fbf$ from (\ref{MeasurementModelofADx}) is known as
$\ell_{1}$-synthesis (or synthesis-based method) \cite{Chen2001},
\cite{Elad2007}, \cite{Candes2011}. From the measurements, one first
finds the sparsest possible coefficient $\xbf$ by solving an
$\ell_{1}$ minimization problem
\begin{equation}
\label{L1synthesis} \hat{\xbf}=\underset {\tilde{\xbf}\in\Rnum^{d}}
{\textrm{argmin}} \Vert\tilde{\xbf}\Vert_{1} \ \ \ \ s.t. \ \
\Vert\ybf-\Phibf\Dbf\tilde{\xbf}\Vert_{2} \leq \epsilon,
\end{equation}
where $\Vert\xbf\Vert_{p} \ (p=1,2)$ denotes the standard
$\ell_{p}$-norm of the vector $\xbf$ and $\epsilon$ is an upper
bound of the noise\footnote{The extension to the Gaussian noise case
is straightforward since with large probability, the Gaussian noise
belongs to bounded sets, see, e.g., lemma $5.1$ in \cite{Cai2009}.}.
Then the solution to $\fbf$ is derived via a synthesis operation,
i.e., $\hat{\fbf}=\Dbf\hat{\xbf}$.

Although empirical studies show that $\ell_{1}$-synthesis often
achieves good recovery results, the theoretical performance of this
method is far from satisfactory. The analytical results in
\cite{Rauhut2008} essentially require that the frame $\Dbf$ has
columns that are extremely uncorrelated such that the compound
matrix $\Phibf\Dbf$ satisfies the requirements imposed by the
traditional compressed sensing assumptions. However, these
requirements are often infeasible when $\Dbf$ is highly coherent.
For example, consider a simple case in which $\Phibf\in
\Rnum^{m\times n}$ is a Gaussian matrix with i.i.d. entries, then
$\Phibf\sim\n(\zebf, \Ibf_{n}\otimes \Ibf_{m})$, where $\otimes$
denotes the Kronecker product and $\Ibf_{m}$ is an identity matrix
of the size $m$.  It is now well known that with very high
probability $\Phibf$ satisfies the restricted isometry property
(RIP) \cite{Candes2005} provided that $m$ is on the order of
$s\log(n/s)$ \cite{Baraniuk2008}, \cite{Candes2006b}. Let us now
examine $\Phibf\Dbf$.   It is not hard to show that
$\Phibf\Dbf\sim\n(\zebf, \Dbf^{*}\Dbf\otimes\Ibf_{m})$, where
$(\cdot)^{*}$ denotes the transpose operation. If $\Dbf$ is unitary,
then $\Phibf\Dbf$ has the same distribution as $\Phibf$, and hence
satisfies the RIP. However, if $\Dbf$ is a coherent frame, then
$\Phibf\Dbf$ may no longer obey the common RIP since the entries of
$\Phibf\Dbf$ are correlated. Meantime, the mutual incoherence
property (MIP) \cite{Donoho2006a} may not apply either, as it is
very hard for $\Phibf\Dbf$ to satisfy the MIP as well when $\Dbf$ is
highly correlated.

The perspective of the results in \cite{Rauhut2008} is that some
sufficient conditions are put on the compound matrix $\Phibf\Dbf$
such that $\xbf$ can be recovered accurately, which leads to a good
estimate of $\fbf$. However, if one is only interested in
reconstructing the signal $\fbf$ and may not care about obtaining a
good recovery of $\xbf$. As pointed out in \cite{Candes2011}, when
the dictionary $\Dbf$ has two identical columns, it seems impossible
to recover a unique sparse coefficient vector $\xbf$ from the
measurements, but we may certainly be able to reconstruct the signal
$\fbf$ accurately. In other words, a good recovery of $\xbf$ may be
unnecessary to guarantee an accurate reconstruction of $\fbf$.

We observe in abundant examples that the $\ell_1$-synthesis method
is also capable of producing fine approximation of $\fbf$ without
recovering accurate coefficient vector $\xbf$.  Known analysis
results such as \cite{Rauhut2008} would then not be able to explain
these fine results by the synthesis approach.

In this article, we present a new performance analysis for the
$\ell_1$-synthesis approach \eqref{L1synthesis} in which the
dictionary $\Dbf$ may be highly - and even perfectly - correlated.
To the best knowledge of the authors, our new results are more
effective than what are known and available.  Our results do not
depend on a good recovery of the coefficients. The basic idea is to
establish the equivalence between the $\ell_1$-synthesis approach
and the optimal-dual-based $\ell_1$-analysis approach recently
proposed in \cite{Liu2011}. Then the recovery error bound for the
latter will naturally lead to that for the former.

This paper is organized as follows. Section \ref{section2}
introduces the family of analysis-based approaches which includes
the standard $\ell_1$-analysis, the general-dual-based
$\ell_1$-analysis, and the optimal-dual-based $\ell_1$-analysis. In
Section \ref{section3}, the equivalence between the
$\ell_1$-synthesis and the optimal-dual-based $\ell_1$-analysis is
established.  The new performance analysis (error bound) for the
$\ell_1$-synthesis is then naturally followed from that of the
optimal-dual based $\ell_1$-analysis approach. Some numerical
experiments are presented in Section \ref{section4} to demonstrate
the effectiveness of the results obtained in Section \ref{section3}.
These examples show that when the usual analysis fails to explain
the working performance of the synthesis approach, our newly
established results do. Conclusion remarks are given in Section
\ref{section5}.

%%%%%%%%%%%%%%%%%%%%%%%%%%%%%%%%%
\section{The Family of Analysis-based Approaches Based on General Dual Frames} \label{section2}
%%%%%%%%%%%%%%%%%%%%%%%%%%%%%%%%%
Alongside the $\ell_1$-synthesis approach, there is a counterpart
that takes an analysis point of view, see e.g., \cite{Elad2005},
\cite{Elad2007}, \cite{Candes2011}. This alternative finds an
estimate of $\fbf$ directly by solving the problem
\begin{equation}\label{StandardL1Analysis}
  \hat{\fbf}=\underset {\tilde{\fbf}\in\Rnum^{n}} {\textrm{argmin}} \Vert\bar{\Dbf}^{*}\tilde{\fbf}\Vert_{1} \ \ \ \  s.t. \ \
  \Vert\ybf-\Phibf\tilde{\fbf}\Vert_{2} \leq \epsilon,
 \end{equation}
where $\bar{\Dbf}$ denotes the canonical dual frame of $\Dbf$, i.e.,
$\bar{\Dbf} = (\Dbf\Dbf^*)^{-1}\Dbf$. Note that if $\Dbf$ is a
Parseval frame, then we have $\bar{\Dbf} = \Dbf$.

It is well known by now \cite{Elad2007} that when $\Dbf$ is a square
and invertible dictionary, the $\ell_{1}$-analysis and
$\ell_{1}$-synthesis approaches are equivalent. However, when $\Dbf$
is an overcomplete frame, the gap between them exists.

A remarkable performance study of the $\ell_{1}$-analysis approach
\eqref{StandardL1Analysis} in the case of Parseval frames
($\bar{\Dbf} = \Dbf$) was given in \cite{Candes2011}. It was shown
that, under suitable conditions on the sensing matrix $\Phibf$, the
solution to \eqref{StandardL1Analysis} is very accurate provided
that $\Dbf^*\fbf$ has rapidly decreasing coefficients. In other
words, when the frame coefficient vector $\Dbf^{*}\fbf$ is
reasonably sparse, $\ell_{1}$-analysis can be the right method to
use.

However, that $\fbf$ is sparse in terms of $\Dbf$ does not imply
$\Dbf^{*}\fbf$ is necessarily sparse. In fact, as the canonical dual
frame expansion in the case of Parseval frames,
$\Dbf^{*}\fbf=\Dbf^{*}\Dbf\xbf$ has the minimum $\ell_{2}$-norm by
the frame property, see, e.g., \cite{Christensen2003} and is usually
fully populated which is also pointed out in \cite{Rauhut2008}. In
other words, the canonical dual frame of $\Dbf$ may be ineffective
in sparsifying $\fbf$ since $\ell_{2}$-norm tends to spread the
coefficients into a large number of small coefficients.

To overcome this difficulty, the standard $\ell_1$-analysis approach
\eqref{StandardL1Analysis} has recently been extended to a more
general case in which the analysis operator can be any dual
frame\footnote{A frame $\{\widetilde{\dbf}_{k}\}_{k\in I}$ is an
alternative dual frame of $\{\dbf_{k}\}_{k\in I}$ if
\begin{equation*}
  \forall \  \fbf \in \Rnum^{n}, \ \ \ \ \fbf = \sum_{k\in I}\langle\fbf, \widetilde{\dbf}_{k}
  \rangle\  \dbf_{k} = \sum_{k\in I}\langle\fbf, {\dbf}_{k}
  \rangle\  \widetilde{\dbf}_{k}.
\end{equation*}} of $\Dbf$ \cite{Liu2011}. This leads to the
following general-dual-based $\ell_1$-analysis approach
\begin{equation}
 \label{generall1analysis}
 \hat{\fbf}=\underset {\ \tilde{\fbf}\in\Rnum^{n}} {\textrm{argmin}}
 \Vert\tilde{\Dbf}^{*}\tilde{\fbf}\Vert_{1} \ \ \ \  s.t. \ \
  \Vert\ybf-\Phibf\tilde{\fbf}\Vert_{2} \leq \epsilon,
\end{equation}
where columns of $\tilde{\Dbf}$ form a general (and any) dual frame
of $\Dbf$. The performance analysis of the general-dual-based
$\ell_1$-analysis approach was also given in \cite{Liu2011}. In
order to introduce the results, we require the concept of $\Dbf$-RIP
\cite{Candes2011}: An $m \times n$ sensing matrix $\Phibf$ is said
to satisfy the restricted isometry property adapted to $\Dbf$
(abbreviated $\Dbf$-RIP) with constant $\delta_{s}\in (0,1)$ if
\begin{equation}\label{DRIP}
(1-\delta_{s})\Vert \vbf \Vert_{2}^{2}\leq \Vert \Phibf\vbf
\Vert_{2}^{2} \leq(1+\delta_{s})\Vert \vbf \Vert_{2}^{2}
\end{equation}
holds for all $\vbf\in \Sigma_{s}$, where $\Sigma_{s}$ is the union
of all subspaces spanned by all subsets of $s$ columns of $\Dbf$.
The validity of the $\Dbf$-RIP was discussed in e.g.,
\cite{Candes2011}, \cite{Krahmer2011}. It was shown in
\cite{Candes2011} that any $m \times n$ matrix $\Phibf$ obeying for
any fixed $\nubf \in \Rnum^{n}$
\begin{equation}
  \label{ConcentrationInequality}
  \Pr\left( \left|\|\Phibf\nubf\|_2^2 - \|\nubf\|_2^2 \right| \geq
  \delta \|\nubf\|_2^2 \right) \leq
  c e^{-\gamma \delta^2 m}, \ \ \ \ \delta \in (0, 1)
\end{equation}
($\gamma$, $c$ are positive constants) will satisfy the $\Dbf$-RIP
with overwhelming probability provided that $m$ is on the order of
$s\log(d/s)$. Many types of random matrices satisfy
\eqref{ConcentrationInequality}, some examples include matrices with
Gaussian, subgaussian, or Bernoulli entries. It has also been shown
in \cite{Krahmer2011} that randomizing the column signs of any
matrix that satisfies the standard RIP results in a matrix which
satisfies the Johnson-Lindenstrauss lemma \cite{Johnson1984}. Such a
matrix would then satisfy the $\Dbf$-RIP via
\eqref{ConcentrationInequality}. Consequently, partial Fourier
matrix (or partial circulant matrix) with randomized column signs
will satisfy the $\Dbf$-RIP since these matrices are known to
satisfy the RIP.

With these preliminaries, we now restate the results in
\cite{Liu2011} as follows.

\begin{Theorem}\label{thm1} \cite{Liu2011}
  Let $\Dbf$ be a general frame of $\Rnum^{n}$ with frame bounds $0<A\leq B<\infty$.  Let $\tilde{\Dbf}$ be an alternative
  dual frame of $\Dbf$ with frame bounds $0<\tilde{A}\leq \tilde{B}<\infty$, and let $\rho=s/b$. Suppose
  \begin{equation}
    \label{SufficientCondition} \left(1-\sqrt{\rho B \tilde{B}}\right)^2 \cdot \delta_{s+a} +
\rho B \tilde{B}\cdot\delta_{b} < 1 - 2\sqrt{\rho B \tilde{B}}
  \end{equation}
holds for some positive integers $a$ and $b$ satisfying $0< b-a\leq
3a$. Then the solution $\hat{\fbf}$ to \eqref{generall1analysis}
satisfies
  \begin{equation}
    \label{ErrorBoundofGDBL1analysis} \Vert \hat{\fbf}-\fbf \Vert_{2} \leq C_{0}\cdot\epsilon +
    C_{1}\cdot\f{\Vert\tilde{\Dbf}^{*}\fbf-(\tilde{\Dbf}^{*}\fbf)_{s}\Vert_{1}}{\sqrt{s}},
  \end{equation}
 where $C_{0}$ and $C_{1}$ are some constants and $(\tilde{\Dbf}^{*}\fbf)_{s}$
 denotes the vector consisting the largest $s$ entries of
 $\tilde{\Dbf}^{*}\fbf$ in magnitude (and setting the other to zero).
\end{Theorem}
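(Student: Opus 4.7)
The plan is to adapt the Candes--Eldar--Needell--Randall shifting argument (originally developed for the Parseval case with $\tilde{\Dbf}=\Dbf$) to the setting of arbitrary dual frames, the new twist being to carry the frame upper bounds $B$ and $\tilde{B}$ through every step so that they surface correctly in the sufficient condition.

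First, set $\hbf := \hat{\fbf}-\fbf$. Feasibility of both $\fbf$ and $\hat{\fbf}$ in \eqref{generall1analysis} combined with the triangle inequality gives $\|\Phibf\hbf\|_2 \leq 2\epsilon$. Since $\tilde{\Dbf}$ is a dual frame of $\Dbf$, the reconstruction identity $\fbf=\Dbf\tilde{\Dbf}^*\fbf$ lets us write $\hbf=\Dbf\wbf$ with $\wbf := \tilde{\Dbf}^*\hbf$, so the quantity to control becomes $\|\Dbf\wbf\|_2$. Let $T_0$ be the index set of the $s$ largest magnitude entries of $\tilde{\Dbf}^*\fbf$. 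Optimality of $\hat{\fbf}$ for \eqref{generall1analysis} yields, after a standard reverse-triangle manipulation, the cone constraint $\|\wbf_{T_0^c}\|_1 \leq \|\wbf_{T_0}\|_1 + 2\|(\tilde{\Dbf}^*\fbf)_{T_0^c}\|_1$. Now partition $T_0^c$ in the shifted manner: let $T_1$ index the $a$ largest entries of $|\wbf_{T_0^c}|$ and let $T_j$ ($j \geq 2$) consist of the next $b-a$ entries in order of decreasing magnitude. The hypothesis $0<b-a\leq 3a$ is exactly what is needed to make the geometric tail sum $\sum_{j \geq 2}\|\wbf_{T_j}\|_2$ absorb into the cone bound with the right constant.

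Next, invoke the $\Dbf$-RIP on $\wbf_{T_{01}} := \wbf_{T_0\cup T_1}$, which is supported on $s+a$ coordinates so that $\Dbf\wbf_{T_{01}} \in \Sigma_{s+a}$; this gives $(1-\delta_{s+a})\|\Dbf\wbf_{T_{01}}\|_2^2 \leq \|\Phibf\Dbf\wbf_{T_{01}}\|_2^2$. Writing $\Phibf\Dbf\wbf_{T_{01}} = \Phibf\hbf - \sum_{j\geq 2}\Phibf\Dbf\wbf_{T_j}$ and using the polarization form of the $\Dbf$-RIP, $|\langle \Phibf\Dbf\ubf,\Phibf\Dbf\vbf\rangle| \leq \delta_b\|\Dbf\ubf\|_2\|\Dbf\vbf\|_2$ for disjointly supported $\ubf,\vbf$ with $|\mathrm{supp}(\ubf)|+|\mathrm{supp}(\vbf)|\leq b$, produces cross terms between $\|\Dbf\wbf_{T_{01}}\|_2$ and $\sum_{j \geq 2}\|\Dbf\wbf_{T_j}\|_2$. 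The latter is bounded by $\sqrt{B}\sum_{j\geq 2}\|\wbf_{T_j}\|_2$ via the upper frame bound of $\Dbf$, and the telescoping/shift argument converts this into $\sqrt{B}\,\|\wbf_{T_0^c}\|_1/\sqrt{b-a}$. The cone constraint then feeds back $\|\wbf_{T_0}\|_1 \leq \sqrt{s}\|\wbf_{T_0}\|_2$, and $\rho=s/b$ appears through $\sqrt{s/(b-a)} \asymp \sqrt{\rho}$ under the admissible regime $0<b-a\leq 3a$. Finally, the factor $\sqrt{\tilde{B}}$ enters when converting back from $\|\wbf_{T_0}\|_2$ to a signal norm, via $\|\wbf_{T_0}\|_2 \leq \|\tilde{\Dbf}^*\hbf\|_2 \leq \sqrt{\tilde{B}}\|\hbf\|_2 = \sqrt{\tilde{B}}\|\Dbf\wbf\|_2$.

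The main obstacle is bookkeeping rather than any single deep step: one must arrange the chain of inequalities so that, after isolating $\|\Dbf\wbf_{T_{01}}\|_2$ on the left, the net coefficient on the right is strictly less than one. Written in terms of $\delta_{s+a}$, $\delta_b$, $\rho$, $B$ and $\tilde{B}$, this positivity requirement is exactly \eqref{SufficientCondition}, and under that condition we recover a bound of the shape $\|\Dbf\wbf_{T_{01}}\|_2 \leq C_0'\,\epsilon + C_1'\,\|(\tilde{\Dbf}^*\fbf)_{T_0^c}\|_1/\sqrt{s}$. Combining this with the tail estimate $\|\Dbf\wbf_{T_{01}^c}\|_2 \leq \sqrt{B}\sum_{j\geq 2}\|\wbf_{T_j}\|_2$ and the triangle inequality $\|\hbf\|_2 \leq \|\Dbf\wbf_{T_{01}}\|_2 + \|\Dbf\wbf_{T_{01}^c}\|_2$ then delivers \eqref{ErrorBoundofGDBL1analysis} with concrete constants $C_0,C_1$ that depend only on $\delta_{s+a}$, $\delta_b$, $B$, $\tilde{B}$ and $\rho$.
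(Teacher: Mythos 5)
You should first note that this paper contains no proof of Theorem \ref{thm1}: it is restated verbatim from \cite{Liu2011}, so your sketch has to be judged against that source's argument, whose overall architecture (a Cand\`es--Eldar--Needell--Randall type argument with the frame bounds $B,\tilde{B}$ carried through) you have broadly reproduced. The genuine gap is your ``polarization form of the $\Dbf$-RIP'': the claim $|\langle \Phibf\Dbf\ubf,\Phibf\Dbf\vbf\rangle|\le\delta_b\|\Dbf\ubf\|_2\|\Dbf\vbf\|_2$ for disjointly supported $\ubf,\vbf$ is false, and false exactly in the regime the theorem is designed for. Disjointness of the supports does not make $\Dbf\ubf$ and $\Dbf\vbf$ orthogonal when $\Dbf$ is coherent; the correct consequence of \eqref{DRIP} is $|\langle \Phibf\Dbf\ubf,\Phibf\Dbf\vbf\rangle-\langle \Dbf\ubf,\Dbf\vbf\rangle|\le\delta\,\|\Dbf\ubf\|_2\|\Dbf\vbf\|_2$, and the extra term $\langle \Dbf\ubf,\Dbf\vbf\rangle$ is uncontrolled: with two identical columns (a case the paper explicitly allows) one gets $\langle \Phibf\Dbf\ubf,\Phibf\Dbf\vbf\rangle\ge(1-\delta_1)\|\Dbf\ubf\|_2\|\Dbf\vbf\|_2$. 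Your stated inequality amounts to assuming a restricted-orthogonality property of $\Phibf\Dbf$, which is precisely the hypothesis this line of work is trying to avoid. Moreover, even the corrected polarization bound applied to cross terms between $T_{01}$ (size $s+a$) and $T_j$ (size $b-a$) would involve $\delta_{s+b}$, not $\delta_b$, so your route cannot produce condition \eqref{SufficientCondition}. The argument that closes needs no cross terms at all: with $\wbf=\tilde{\Dbf}^*\hbf$ and $\hbf=\Dbf\wbf$, use
\begin{equation*}
\sqrt{1-\delta_{s+a}}\,\|\Dbf\wbf_{T_{01}}\|_2\le\|\Phibf\Dbf\wbf_{T_{01}}\|_2\le\|\Phibf\hbf\|_2+\sum_{j\ge2}\|\Phibf\Dbf\wbf_{T_j}\|_2\le 2\epsilon+\sqrt{1+\delta_b}\,\sqrt{B}\sum_{j\ge2}\|\wbf_{T_j}\|_2,
\end{equation*}
together with $\|\hbf\|_2\le\|\Dbf\wbf_{T_{01}}\|_2+\sqrt{B}\sum_{j\ge2}\|\wbf_{T_j}\|_2$ and $\|\wbf_{T_0}\|_2\le\|\wbf\|_2\le\sqrt{\tilde{B}}\|\hbf\|_2$.

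The second, smaller gap is the $\rho$-bookkeeping. Writing $\sqrt{s/(b-a)}\asymp\sqrt{\rho}$ is not a proof step: the crude block estimate only gives $\sum_{j\ge2}\|\wbf_{T_j}\|_2\le\|\wbf_{T_0^c}\|_1/\sqrt{b-a}$, which leads to a condition with $s/(b-a)$ in place of $\rho=s/b$. To obtain \eqref{SufficientCondition} exactly one needs the shifting inequality of Cai, Wang and Xu, applied blockwise with shift length $a$ and block length $b-a$; this is the only place the hypothesis $0<b-a\le 3a$ enters, and it is what upgrades the denominator from $\sqrt{b-a}$ to $\sqrt{b}$, giving $\sum_{j\ge2}\|\wbf_{T_j}\|_2\le(\sqrt{s}\,\sqrt{\tilde{B}}\|\hbf\|_2+2\|\tilde{\Dbf}^*\fbf-(\tilde{\Dbf}^*\fbf)_s\|_1)/\sqrt{b}$ via the cone constraint. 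Feeding this into the chain above, the loop closes iff $\sqrt{\rho B\tilde{B}}\,(\sqrt{1+\delta_b}+\sqrt{1-\delta_{s+a}})<\sqrt{1-\delta_{s+a}}$, which is algebraically equivalent to \eqref{SufficientCondition}; with your constants the final sufficient condition would come out different from the one stated, so the theorem as written would not be recovered.
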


Theorem \ref{thm1} shows that if $\Phibf$ satisfies some proper
conditions, e.g., \eqref{SufficientCondition}, then the solution to
\eqref{generall1analysis} is very accurate provided that
$\tilde{\Dbf}^*\fbf$ has rapidly decreasing coefficients. By the
definition of the $\Dbf$-RIP, the condition
\eqref{SufficientCondition} is independent of the coherence of the
dictionary $\Dbf$. For differently chosen $a$ and $b$,
\eqref{SufficientCondition} will give rise to different conditions
on the $\Dbf$-RIP constants $\delta_{s+a}$ and $\delta_{b}$. For
instance, if $\Dbf$ is a Parseval frame and $\tilde{\Dbf}$ is its
canonical dual frame, i.e., $B\tilde{B}=1$, then
\eqref{SufficientCondition} is satisfied whenever
$\delta_{2s}<0.1398$ \cite{Liu2011}.

With the error bound \eqref{ErrorBoundofGDBL1analysis}, we can
easily see the potential superiority of using alternative dual
frames as analysis operators. For clarity, we consider a simple case
in which the noise is free, i.e., $\epsilon = 0$. Then the error
bound \eqref{ErrorBoundofGDBL1analysis} reduces to
  \begin{equation}
    \label{ErrorBoundNoiseFree} \Vert \hat{\fbf}-\fbf \Vert_{2} \leq
    C_{1}\cdot\f{\Vert\tilde{\Dbf}^{*}\fbf-(\tilde{\Dbf}^{*}\fbf)_{s}\Vert_{1}}{\sqrt{s}}.
  \end{equation}
Clearly, the quality of the bound
${\Vert\tilde{\Dbf}^{*}\fbf-(\tilde{\Dbf}^{*}\fbf)_{s}\Vert_{1}}/{\sqrt{s}}$
in \eqref{ErrorBoundNoiseFree} is measured in terms of how effective
$\tilde{\Dbf}^{*}\fbf$ is in spasifying the signal $\fbf$ with
respect to the dictionary $\Dbf$. To explain, suppose that $\fbf$
has a sparse representation in $\Dbf$, i.e., $\fbf = \Dbf\xbf$,
where $\xbf$ is a sparse coefficient vector. As discussed before,
the canonical dual frame expansion of $\fbf$ has the minimum
$\ell_{2}$-norm, i.e., $\|\bar{\Dbf}^*\fbf\|_2 = \underset
{\tilde{\xbf}: \Dbf\tilde{\xbf}=\fbf} {\textrm{min}}
\|\tilde{\xbf}\|_2$, and is ineffective in promoting sparsity in
general. On the other hand, when the analysis operator can be any
dual frame of $\Dbf$, it is not hard to imagine that there should be
some dual frame of $\Dbf$, denoted by $\tilde{\Dbf}_{\mathcal {S}}$,
such that $\tilde{\Dbf}_{\mathcal {S}}^*\fbf = \xbf$. This is due to
the fact that all coefficients of a frame expansion of $\fbf$ in
$\Dbf$ should correspond to some dual frame of $\Dbf$, which really
is the spirit of frame expansions. Generally,
$\tilde{\Dbf}_{\mathcal {S}}$ is much more effective in sparsifying
the signal $\fbf$ than the canonical dual frame does. Therefore, one
may expect a better recovery performance by taking some ``proper''
alternative dual frame as the analysis operator.

The important question then is how to choose some appropriate dual
frame such that the corresponding analysis coefficients are as
sparse as possible. Since the true $\fbf$ is never known before hand
in practice, it seems to impossible to explicitly construct some
proper dual frame $\tilde{\Dbf}$ such that $\tilde{\Dbf}^*\fbf$ is
sparse without additional priori knowledge about the signal $\fbf$.
One approach proposed in \cite{Liu2011} is by the method of
optimal-dual-based $\ell_1$-analysis:
\begin{equation}
 \label{optimaldualbasedL1analysis1}
 \hat{\fbf}=\underset {\Dbf\tilde{\Dbf}^* = \Ibf, \ \tilde{\fbf}\in\Rnum^{n}} {\textrm{argmin}}
 \Vert\tilde{\Dbf}^{*}\tilde{\fbf}\Vert_{1} \ \ \ \  s.t. \ \
  \Vert\ybf-\Phibf\tilde{\fbf}\Vert_{2} \leq \epsilon,
\end{equation}
where the optimization is performed simultaneously over both all
dual frames $\tilde{\Dbf}$ of $\Dbf$ and the feasible signal set.
This seemingly complicated optimization problem can be reformulated
into a simplified form. Note that the class of all dual frames for
$\Dbf$ is given by \cite{Li1995}
\begin{equation} \label{generaldual}
 \tilde{\Dbf}  = (\Dbf\Dbf^*)^{-1}\Dbf +
 \Wbf^*(\Ibf_d-\Dbf^*(\Dbf\Dbf^*)^{-1}\Dbf)= \bar{\Dbf} + \Wbf^*\Pbf,
\end{equation}
where $\Pbf\equiv \Ibf_d-\Dbf^*(\Dbf\Dbf^*)^{-1}\Dbf$ denotes the
orthogonal projection onto the null space of $\Dbf$ and $\Wbf \in
\Rnum^{d \times n}$ is an arbitrary matrix. Plug \eqref{generaldual}
into \eqref{optimaldualbasedL1analysis1}, we obtain
\begin{equation}
 \label{optimaldualbasedL1analysis2}
 (\hat{\fbf}, \ \hat{\gbf})=\underset {\tilde{\fbf}\in\Rnum^{n},\  \gbf\in\Rnum^{d}} {\textrm{argmin}}
 \Vert\bar{\Dbf}^{*}\tilde{\fbf} + \Pbf \gbf\Vert_{1} \ \ \ \  s.t. \ \
  \Vert\ybf-\Phibf\tilde{\fbf}\Vert_{2} \leq \epsilon,
\end{equation}
where we have used the fact that when $\tilde{\fbf}\neq \zebf$,
$\gbf \equiv \Wbf\tilde{\fbf}$ can be any vector in $\Rnum^{d}$ due
to the fact that $\Wbf$ is free. Note that if $\Pbf \gbf \equiv
\zebf$, then \eqref{optimaldualbasedL1analysis2} reduces to the
standard $\ell_1$-analysis approach \eqref{StandardL1Analysis}. In
\cite{Liu2011}, an iterative algorithm based on the split Bregman
iteration \cite{Goldstein2009} was developed to solve the
optimization problem \eqref{optimaldualbasedL1analysis2}
efficiently.

Clearly, the solution to \eqref{optimaldualbasedL1analysis1}
definitely corresponds to that of \eqref{generall1analysis} with
some ``optimal'' dual frame, say $\tilde{\Dbf}_{\text{o}}$ as the
analysis operator. The optimality here is in the sense that
$\|\tilde{\Dbf}_{\text{o}}^*\hat{\fbf}\|_1$ achieves the smallest
$\|\tilde{\Dbf}^*\tilde{\fbf}\|_1$ in value among all dual frames
$\tilde{\Dbf}$ of $\Dbf$ and feasible signals $\tilde{\fbf}$
satisfied the constraint in \eqref{optimaldualbasedL1analysis1}.
Once $\hat{\fbf}$ and $\hat{\gbf}$ are obtained (through solving
\eqref{optimaldualbasedL1analysis2}), it follows from
\eqref{generaldual} that the analysis operator
$\tilde{\Dbf}_{\text{o}}^*$ is given by
\begin{equation}
  \label{optimalduals}
  \tilde{\Dbf}_{\text{o}}^* = \bar{\Dbf}^* + \Pbf\Wbf_{\text{o}},
\end{equation}
with  $\Wbf_{\text{o}}$ satisfying
\begin{equation} \label{EquationofW}
\hat{\gbf} = \Wbf_{\text{o}}\hat{\fbf}.
\end{equation}
Evidently, the optimal dual frame $\tilde{\Dbf}_{\text{o}}$ depends
on the solutions of \eqref{optimaldualbasedL1analysis2}. By
utilizing the fact that $\text{vec}(\Abf\Bbf\Cbf)=(\Cbf^*\otimes
\Abf)\text{vec}(\Bbf)$, the above equation \eqref{EquationofW} is
equivalent to
\begin{equation}
  \label{EquationofWEQ}
  (\hat{\fbf}^*\otimes \Ibf_d) \cdot
  \text{vec}(\Wbf_{\text{o}}) = \hat{\gbf},
\end{equation}
where $\text{vec}(\Wbf_{\text{o}})$ denotes the vectorization of the
matrix $\Wbf_{\text{o}}$ by stacking the columns of
$\Wbf_{\text{o}}$ into a single vector. Evidently, the solution to
\eqref{EquationofWEQ} is non-unique in general since this equation
is highly underdetermined with $n$ equations but $nd$ unknowns. The
class of solutions to \eqref{EquationofWEQ} is given by
\begin{align}\label{LSofW}
 \text{vec}(\Wbf_{\text{o}}) & = (\hat{\fbf}^*\otimes
 \Ibf_d)^\dag \hat{\gbf} + \left(\Ibf_{nd}-(\hat{\fbf}^*\otimes
 \Ibf_d)^\dag (\hat{\fbf}^*\otimes
 \Ibf_d)\right)\wbf \notag \\
 & = (\hat{\fbf}\otimes
 \Ibf_d) \hat{\gbf}/\|\hat{\fbf}\|_2^2 + \left(\Ibf_{nd}-(\hat{\fbf}\hat{\fbf}^*\otimes
 \Ibf_d)\right)\wbf/\|\hat{\fbf}\|_2^2,
\end{align}
where $\Abf^\dag$ denotes the pseudo-inverse of $\Abf$ and $\wbf \in
\Rnum^{nd\times 1}$ is an arbitrary vector. In deriving
\eqref{LSofW}, we have used the two facts that $(\Abf \otimes
\Bbf)^\dag = \Abf^\dag \otimes \Bbf^\dag$ and $(\Abf \otimes
\Bbf)(\Cbf \otimes \Dbf) = \Abf\Cbf \otimes \Bbf\Dbf$, see e.g.,
\cite{Lutkepohl1996}. Let $\wbf = \zebf$, then \eqref{LSofW} reduces
to the least square solution of $\Wbf_{\text{o}}$
\begin{equation}\label{WLS}
 \Wbf_{\text{o}}^{\text{ls}} = (\hat{\fbf}^*\otimes
 \hat{\gbf}) /\|\hat{\fbf}\|_2^2.
\end{equation}
If we choose $\Wbf_{\text{o}}=\Wbf_{\text{o}}^{\text{ls}}$, then
\eqref{optimalduals} becomes
\begin{equation}\label{OptimaldualLS}
\tilde{\Dbf}_{\text{o}}^* = \bar{\Dbf}^* + \Pbf
\Wbf_{\text{o}}^{\text{ls}} = \bar{\Dbf}^* + (\hat{\fbf}^*\otimes
 \Pbf\hat{\gbf}) /\|\hat{\fbf}\|_2^2.
\end{equation}
It is this form \eqref{OptimaldualLS} which will be used to
construct the optimal dual frame in the numerical experiments.

Figure \ref{Fig1} provides a schematic overview of the family of
dual-based $\ell_1$-analysis approaches. For the standard
$\ell_1$-analysis approach \eqref{StandardL1Analysis} which uses the
canonical dual frame of $\Dbf$ as the analysis operator, the
recovered signal $\hat{\fbf}$ has the smallest
$\|\bar{\Dbf}^*\tilde{\fbf}\|_1$ in value among the feasible signal
set. While for the optimal-dual-based $\ell_1$-analysis approach
\eqref{optimaldualbasedL1analysis1}, the optimization is not only
over the feasible signal set but also over all dual frames
$\tilde{\Dbf}$ of $\Dbf$. The recovered signal $\hat{\fbf}$ and
optimal dual frame $\tilde{\Dbf}_{\text{o}}$ (non-unique) produce
the smallest $\|\tilde{\Dbf}^*\tilde{\fbf}\|_1$ in value. When the
signal of interest has a sparse representation in a redundant frame,
one may expect that the optimal dual frame may be much effective in
sparsfying the true signal than the canonical dual frame does. Then
a better recovery performance may be achieved by the
optimal-dual-based $\ell_1$-analysis approach. Indeed, we have seen
that the signal recovery via \eqref{optimaldualbasedL1analysis1} is
much more effective than that of the standard $\ell_1$-analysis
approach \eqref{StandardL1Analysis} which uses the canonical dual
frame as the analysis operator. Moreover, the optimal-dual-based
analysis method provides a new and more effective performance
analysis to the $\ell_1$-synthesis approach.

 \begin{figure}
 \begin{center}
 \begin{tabular}{c}
 \includegraphics[height=5.5cm]
 {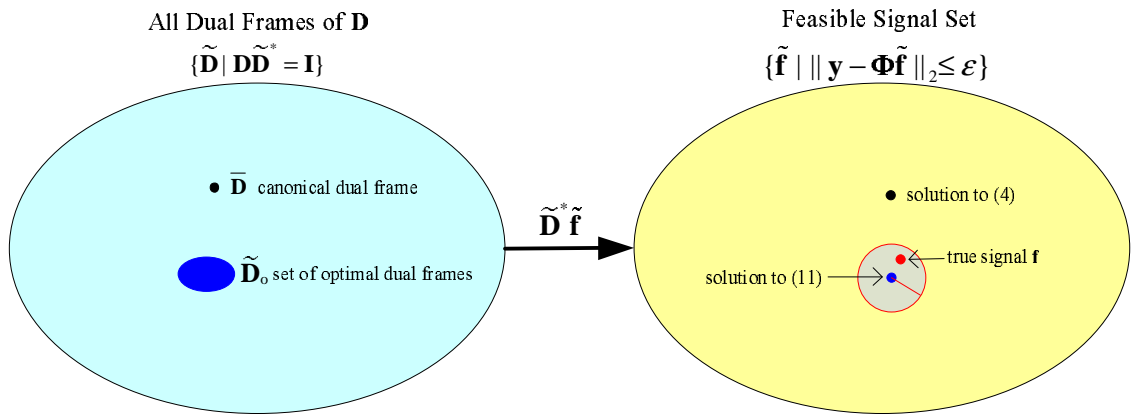}
 \end{tabular}
 \end{center}
 \caption{A schematic
overview of the family of dual-based $\ell_1$-analysis approaches}
 \label{Fig1}
 \end{figure}

%%%%%%%%%%%%%%%%%%%%%%%%%%%%%%%%%
\section{Performance Analysis of $\ell_1$-Synthesis} \label{section3}
%%%%%%%%%%%%%%%%%%%%%%%%%%%%%%%%%
In this section, we present a new performance analysis of the
$\ell_1$-synthesis approach. We begin by showing that the
$\ell_1$-synthesis and the optimal-dual-based $\ell_1$-analysis
approaches are equivalent.

\begin{Theorem}\label{equivalence}
$\ell_1$-synthesis and optimal-dual-based $\ell_1$-analysis are
equivalent.
\end{Theorem}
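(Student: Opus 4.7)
The plan is to prove Theorem \ref{equivalence} by establishing a value-preserving correspondence between the feasible sets of the two programs. Concretely, I will show (i) every feasible pair $(\tilde{\fbf},\tilde{\Dbf})$ for the optimal-dual-based $\ell_1$-analysis problem \eqref{optimaldualbasedL1analysis1} yields a feasible $\tilde{\xbf}$ for the $\ell_1$-synthesis problem \eqref{L1synthesis} with identical objective value, and (ii) conversely, every feasible $\tilde{\xbf}$ for \eqref{L1synthesis} arises from such a pair. Because the signal recovered by synthesis is $\hat{\fbf}=\Dbf\hat{\xbf}$, this bijection at the level of objective values will force the two optimal recovered signals $\hat{\fbf}$ to agree.

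For direction (i), given any pair $(\tilde{\fbf},\tilde{\Dbf})$ with $\Dbf\tilde{\Dbf}^{*}=\Ibf_{n}$ and $\|\ybf-\Phibf\tilde{\fbf}\|_{2}\leq\epsilon$, I will set $\tilde{\xbf}:=\tilde{\Dbf}^{*}\tilde{\fbf}$. The dual-frame reproducing identity then gives $\Dbf\tilde{\xbf}=\Dbf\tilde{\Dbf}^{*}\tilde{\fbf}=\tilde{\fbf}$, so $\|\ybf-\Phibf\Dbf\tilde{\xbf}\|_{2}=\|\ybf-\Phibf\tilde{\fbf}\|_{2}\leq\epsilon$ and the objective $\|\tilde{\xbf}\|_{1}=\|\tilde{\Dbf}^{*}\tilde{\fbf}\|_{1}$ is preserved. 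This immediately yields the inequality $\min_{\text{syn}}\|\tilde{\xbf}\|_{1}\leq\min_{\text{opt-analy}}\|\tilde{\Dbf}^{*}\tilde{\fbf}\|_{1}$ and shows that $\Dbf$ applied to any synthesis optimizer reproduces an analysis optimizer.

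For direction (ii), given a synthesis-feasible $\tilde{\xbf}$, I will set $\tilde{\fbf}:=\Dbf\tilde{\xbf}$ and construct a dual frame $\tilde{\Dbf}$ of $\Dbf$ with $\tilde{\Dbf}^{*}\tilde{\fbf}=\tilde{\xbf}$, so that the analysis objective matches $\|\tilde{\xbf}\|_{1}$. Using the parameterization \eqref{generaldual}, namely $\tilde{\Dbf}^{*}=\bar{\Dbf}^{*}+\Pbf\Wbf$, together with $\bar{\Dbf}^{*}\Dbf=\Ibf_{d}-\Pbf$, the requirement $\tilde{\Dbf}^{*}\tilde{\fbf}=\tilde{\xbf}$ reduces to $\Pbf\Wbf\tilde{\fbf}=\Pbf\tilde{\xbf}$. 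When $\tilde{\fbf}\neq\zebf$ I can simply take the rank-one choice
\begin{equation*}
\Wbf=\tilde{\xbf}\tilde{\fbf}^{*}/\|\tilde{\fbf}\|_{2}^{2},
\end{equation*}
which satisfies $\Wbf\tilde{\fbf}=\tilde{\xbf}$ and therefore also $\Pbf\Wbf\tilde{\fbf}=\Pbf\tilde{\xbf}$; note that this choice matches the form \eqref{WLS} used later in the numerical experiments, which is reassuring. The degenerate case $\tilde{\fbf}=\zebf$ (which forces $\|\ybf\|_{2}\leq\epsilon$) is handled separately by observing that both programs then admit the trivial solution $\hat{\fbf}=\zebf$ with objective $0$.

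The main obstacle, as I see it, is not an analytical difficulty but a bookkeeping one: one must verify that the constructed $\tilde{\Dbf}$ in direction (ii) is genuinely a dual frame (automatic once $\tilde{\Dbf}^{*}=\bar{\Dbf}^{*}+\Pbf\Wbf$ is in the form \eqref{generaldual}, since $\Dbf\Pbf=\zebf$), and that the correspondence preserves not only objective values but also the constraint. Once both inequalities on the optimal values are in hand, the conclusion is that the two programs share the same optimal value and that $\hat{\fbf}_{\text{syn}}=\Dbf\hat{\xbf}$ recovers the same signal as $\hat{\fbf}_{\text{opt-analy}}$, which is precisely the asserted equivalence.
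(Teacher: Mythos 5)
Your proof is correct and is essentially the paper's argument: both amount to the value-preserving change of variables $\tilde{\xbf}=\tilde{\Dbf}^{*}\tilde{\fbf}$ (equivalently $\tilde{\xbf}=\bar{\Dbf}^{*}\tilde{\fbf}+\Pbf\gbf$ with $\Dbf\tilde{\xbf}=\tilde{\fbf}$) in one direction and the row/null-space splitting $\tilde{\xbf}=\bar{\Dbf}^{*}\Dbf\tilde{\xbf}+\Pbf\tilde{\xbf}$ with $\tilde{\fbf}=\Dbf\tilde{\xbf}$ in the other, the paper carrying this out on the reformulation \eqref{optimaldualbasedL1analysis2} while you work directly with \eqref{optimaldualbasedL1analysis1} via the parameterization \eqref{generaldual}. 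Your explicit rank-one choice of $\Wbf$ (matching \eqref{WLS}) and the treatment of the degenerate case $\tilde{\fbf}=\zebf$ simply make explicit steps the paper handles implicitly in its Section 2 derivation.
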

\begin{proof}
  We start with the optimal-dual-based $\ell_1$-analysis approach as
  posed in \eqref{optimaldualbasedL1analysis2}. Let $\tilde{\xbf} = \bar{\Dbf}^{*}\tilde{\fbf} + \Pbf \gbf
  $, then we have $\Dbf \tilde{\xbf} = \tilde{\fbf}$. Since both $\tilde{\fbf}$ and $\gbf$
  are free, then $\tilde{\xbf} \in \Rnum^d$. Put the two facts into
  \eqref{optimaldualbasedL1analysis2}, we obtain the
  $\ell_1$-synthesis method \eqref{L1synthesis}. On the other hand, we start from the
  $\ell_1$-synthesis formulation. For any $\tilde{\xbf} \in
  \Rnum^d$, the following decomposition always holds
  \begin{align*}
    \tilde{\xbf} & = \tilde{\xbf}_R + \tilde{\xbf}_N = \Dbf^*(\Dbf\Dbf^*)^{-1}\Dbf \tilde{\xbf} + \Pbf\tilde{\xbf}\\
                 & = \bar{\Dbf}^*\Dbf \tilde{\xbf} +
                 \Pbf\tilde{\xbf},
  \end{align*}
  where $\tilde{\xbf}_R$ and $\tilde{\xbf}_N$ are the components of
  $\tilde{\xbf}$ belonging to the row space and the null space of
  $\Dbf$, respectively. Define $\tilde{\fbf} = \Dbf \tilde{\xbf} \in \Rnum^n$ and $\gbf = \tilde{\xbf} \in \Rnum^d$, we can arrive
  at the optimal-dual-based $\ell_1$-analysis approach and the two
  methods are equivalent.
\end{proof}

\noindent{\bf Remark 1:}\ By taking a geometrical description, it
was shown in \cite{Elad2007} that any $\ell_1$-analysis problem
(with full-rank analysis operator) may be reformulated as an
equivalent $\ell_1$-synthesis one. Our results indicate that the
reverse is also true. For a given $\ell_1$-synthesis problem, there
exist some appropriate analysis operators (e.g., optimal dual frames
of $\Dbf$) such that the corresponding $\ell_1$-analysis problem is
equivalent to the $\ell_1$-synthesis one.

With this equivalence, we now establish the error bound of the
$\ell_1$-synthesis approach. Since $\tilde{\Dbf}_{\text{o}}$ is some
alternative dual frame of $\Dbf$, i.e.,
$\Dbf\tilde{\Dbf}_{\text{o}}^* = \Ibf$, a direct application of
Theorem \ref{thm1} leads to the following results.

\begin{Theorem}\label{thm2}
  Let $\Dbf$ be a general frame of $\Rnum^{n}$ with frame bounds $0<A\leq B<\infty$.  Let $\tilde{\Dbf}_{\text{o}}$ be
  some optimal dual frame of $\Dbf$ defined in \eqref{optimalduals} with frame bounds
$0<\tilde{A}_{\text{o}}\leq \tilde{B}_{\text{o}}<\infty$, and let
$\rho=s/b$. Suppose
  \begin{equation}
    \label{SufficientCondtionO} \left(1-\sqrt{\rho B \tilde{B}_{\text{o}}}\right)^2 \cdot \delta_{s+a} +
\rho B \tilde{B}_{\text{o}}\cdot\delta_{b} < 1 - 2\sqrt{\rho B
\tilde{B}_{\text{o}}}
  \end{equation}
holds for some positive integers $a$ and $b$ satisfying $0< b-a\leq
3a$. Then the solution $\hat{\fbf}$ to \eqref{L1synthesis} (or to
\eqref{optimaldualbasedL1analysis2}) satisfies
  \begin{equation}
    \label{ErrorBoundofODBL1analysis} \Vert \hat{\fbf}-\fbf \Vert_{2} \leq C_{0}\cdot\epsilon +
    C_{1}\cdot \f{\|\tilde{\Dbf}_{\text{o}}^*\fbf-(\tilde{\Dbf}_{\text{o}}^*\fbf)_{s}\|_{1}}{\sqrt{s}},
  \end{equation}
 where $C_{0}$ and $C_{1}$ are some constants and $(\tilde{\Dbf}_{\text{o}}^*\fbf)_{s}$
 denotes the vector consisting the largest $s$ entries of
 $\tilde{\Dbf}_{\text{o}}^*\fbf$ in magnitude.
\end{Theorem}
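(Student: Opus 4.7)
The plan is to read Theorem \ref{thm2} as a direct specialization of Theorem \ref{thm1} made possible by the equivalence established in Theorem \ref{equivalence}. Since solving the $\ell_1$-synthesis problem \eqref{L1synthesis} produces the same $\hat{\fbf}$ as solving the optimal-dual-based $\ell_1$-analysis problem \eqref{optimaldualbasedL1analysis2}, and since \eqref{optimaldualbasedL1analysis2} is in turn equivalent to the general-dual-based $\ell_1$-analysis \eqref{generall1analysis} in which the analysis operator is the particular dual $\tilde{\Dbf}_{\text{o}}$ defined in \eqref{optimalduals}, the conclusion should follow by plugging $\tilde{\Dbf} = \tilde{\Dbf}_{\text{o}}$ into Theorem \ref{thm1}.

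Concretely, I would proceed in the following steps. First, I would invoke Theorem \ref{equivalence} to replace the $\ell_1$-synthesis problem by the optimal-dual-based $\ell_1$-analysis problem \eqref{optimaldualbasedL1analysis1}; these share a common minimizer $\hat{\fbf}$. Second, by the parameterization \eqref{generaldual} of all dual frames of $\Dbf$, the minimizer $\hat{\fbf}$ of \eqref{optimaldualbasedL1analysis1} is equivalently the minimizer of \eqref{generall1analysis} with analysis operator taken to be $\tilde{\Dbf}_{\text{o}}^*$ as given in \eqref{optimalduals}. In particular, $\tilde{\Dbf}_{\text{o}}$ is an alternative dual frame of $\Dbf$, so it satisfies the hypotheses of Theorem \ref{thm1} with frame bounds $\tilde{A}_{\text{o}}, \tilde{B}_{\text{o}}$. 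Third, under the hypothesis \eqref{SufficientCondtionO}, which is precisely \eqref{SufficientCondition} with $\tilde{B}$ replaced by $\tilde{B}_{\text{o}}$, Theorem \ref{thm1} yields
\begin{equation*}
 \Vert \hat{\fbf}-\fbf \Vert_{2} \leq C_{0}\cdot\epsilon + C_{1}\cdot\f{\Vert\tilde{\Dbf}_{\text{o}}^{*}\fbf-(\tilde{\Dbf}_{\text{o}}^{*}\fbf)_{s}\Vert_{1}}{\sqrt{s}},
\end{equation*}
which is exactly \eqref{ErrorBoundofODBL1analysis}.

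The main conceptual point to be careful about is that $\tilde{\Dbf}_{\text{o}}$ is defined in terms of the unknown solution $\hat{\fbf}$ via \eqref{optimalduals}--\eqref{WLS}, so a priori it is not a fixed dual frame that one plugs into Theorem \ref{thm1} in advance. However, the argument is not circular: for \emph{any} concrete minimizer $\hat{\fbf}$ produced by the synthesis/optimal-dual-analysis program, the associated $\tilde{\Dbf}_{\text{o}}$ is an honest alternative dual of $\Dbf$ with some bound $\tilde{B}_{\text{o}}$, and Theorem \ref{thm1} is applied post hoc to this realized dual. The hypothesis \eqref{SufficientCondtionO} is thus a data-dependent $\Dbf$-RIP condition, but Theorem \ref{thm1} is proved without any assumption on how $\tilde{\Dbf}$ was selected, so the application is valid. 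I would flag this explicitly in the proof. No other step requires new calculation: the decay of $\tilde{\Dbf}_{\text{o}}^*\fbf$ appearing in the bound is precisely what the optimal-dual-based formulation \eqref{optimaldualbasedL1analysis1} is designed to minimize, which is what makes the resulting bound more informative than the one obtained from the canonical dual in \eqref{StandardL1Analysis}.
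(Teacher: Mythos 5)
Your proposal is correct and follows essentially the same route as the paper: invoke the equivalence of $\ell_1$-synthesis with the optimal-dual-based $\ell_1$-analysis problem, note that $\tilde{\Dbf}_{\text{o}}$ is an honest alternative dual frame of $\Dbf$, and apply Theorem \ref{thm1} with $\tilde{\Dbf}=\tilde{\Dbf}_{\text{o}}$. Your explicit remark that $\tilde{\Dbf}_{\text{o}}$ is solution-dependent and that Theorem \ref{thm1} is applied post hoc to the realized dual is a point the paper passes over silently, and it is worth keeping.
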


Theorem \ref{thm2} shows that, under suitable conditions on the
sensing matrix, the recovered signal $\hat{\fbf}$ by
$\ell_1$-synthesis is very accurate provided that
$\tilde{\Dbf}_{\text{o}}^*\fbf$ has rapidly decreasing coefficients.
By the optimality of $\tilde{\Dbf}_{\text{o}}$, one may expect that
$\tilde{\Dbf}_{\text{o}}$ will promote high sparsity in the frame
expansion of the signal $\fbf$. Indeed, as we shall see in the
numerical experiments, $\tilde{\Dbf}_{\text{o}}$ is much more
effective in sparsifying $\fbf$ than the canonical dual frame does.
Consequently, in comparison to the standard $\ell_1$-analysis
approach, a better signal recovery is often achieved by
$\ell_1$-synthesis.

More importantly, this new performance analysis result is capable of
explaining examples of successful solutions and fine approximations
by the $\ell_1$-synthesis approach while the recovered coefficient
vector $\xbf$ is no where near its true value.  Known performance
analysis results would not have such capacity.

%%%%%%%%%%%%%%%%%%%%%%%%%%%%%%%%%
\section{Numerical Results} \label{section4}
%%%%%%%%%%%%%%%%%%%%%%%%%%%%%%%%%
In this section, we present some numerical experiments to
demonstrate the effectiveness of the performance analysis results
for $\ell_1$-synthesis. In these experiments, we use two types of
frames: Gabor frames and a concatenation of the coordinate and
Fourier bases. The sensing matrix $\Phibf$ is a Gaussian matrix with
$m=32$, $ n=128$. Since the dependence on the noise in the error
bound \eqref{ErrorBoundofODBL1analysis} is optimal and for the
purpose of clarity, we only consider the noise-free case. Both
$\ell_1$-analysis and $\ell_1$-synthesis problems are solved by the
algorithm developed in \cite{Liu2011} because the returned auxiliary
variable ($\Pbf\gbf$) by this algorithm can be used to construct the
optimal dual frame $\tilde{\Dbf}_{\text{o}}$ \eqref{OptimaldualLS}.
For completeness of this paper, this algorithm is included in
Appendix \ref{Appendix1}. We set $\lambda=\mu=1$, $tol=10^{-12}$,
$nInner = 5$, and $nOuter = 100$ in this algorithm  for all
experiments.

\noindent {\bf Example 1: Gabor Frames.}\ \  Recall that for a
window function $g$ and positive time-frequency shift parameters
$\alpha$ and $\beta$, the Gabor frame is given by
\begin{equation}
  \{g_{_{l,k}}(t) = g(t-k\alpha)e^{2\pi il\beta t}\}_{l,k}.
\end{equation}
For many practical applications such as radar and sonar, the
received signal $f$ often has the form
\begin{equation}
  f(t) = \sum_{k=1}^s a_k g(t-t_k)e^{i\omega_k t}.
\end{equation}
Evidently, $f$ is sparse with respect to a Gabor frame. In this
experiment, we construct a Gabor dictionary with Gaussian windows,
oversampled by a factor of $30$ so that $d=30\times n=3840$. The
tested signal $\fbf$ is sparse with respect to the constructed Gabor
frame with sparsity $s = \text{ceil}(0.2\times m) = 7$. The
positions of the nonzero entries of the coefficient vector $\xbf$
are selected uniformly at random, and each nonzero value is sampled
from standard Gaussian distribution.

Figure \ref{Fig2} (a) shows that when $\Dbf$ is highly coherent with
coherence\footnote{The coherence of the dictionary $\Dbf$ is defined
as $\mu(\Dbf) = \underset {j \neq k} {\textrm{max}}\frac{|\langle
\dbf_j, \dbf_k\rangle|}{\|\dbf_j\|_2\|\dbf_k\|_2}$, where $\dbf_j$
and $\dbf_k$ denote columns of $\Dbf$. We say that $\Dbf$ is
incoherent if $\mu(\Dbf)$ is small.} $\mu(\Dbf)=0.9934$, the
recovered coefficients by the $\ell_1$-synthesis are disappointing
(with a relative error $\|\bar{\xbf}-\xbf\|_2/\|\xbf\|_2=0.9039$).
However, the signal recovered by the $\ell_1$-synthesis is
nevertheless quite acceptable (with a relative error equal to
$0.0845$), see Figure \ref{Fig2} (b). This example tells us that a
good recovery of the coefficients $\xbf$ may be unnecessary to
guarantee a fine reconstruction of the signal $\fbf$.  This
phenomenon is explainable by the new performance analysis result,
but not by performance results based on the accuracy of the recovery
of the coefficient vector $\xbf$.

Figure \ref{Fig2} (b) also shows that the signal recovery via
$\ell_1$-synthesis is much better than that of $\ell_1$-analysis
(relative error: $0.0845$ vs. $0.3445$). This is because the optimal
dual frame $\tilde{\Dbf}_{\text{o}}$ is much more effective in
promoting sparsity in the frame expansion of $\fbf$ than the
canonical dual frame $\bar{\Dbf}$ does. Figure \ref{Fig2} (c)
compares the largest 100 coefficients (in magnitude) of
$\bar{\Dbf}^*\fbf$ and $\tilde{\Dbf}_{\text{o}}^*\fbf$, where
$\tilde{\Dbf}_{\text{o}}^*$ is determined by \eqref{OptimaldualLS}.

\noindent {\bf Example 2: Concatenations.}\ \ When signals of
interest are sparse over several orthonormal bases (or frames), it
is natural to use a dictionary $\Dbf$ consisting of a concatenation
of these bases (or frames). In this experiment, we consider a
dictionary consisting of the coordinate and Fourier bases, i.e.,
$\Dbf = [\Ibf, \Fbf]/\sqrt{2}$. The tested signal $\fbf$ is a linear
combination of spikes and sinusoids, i.e., $\fbf = \fbf_1 + \fbf_2 =
\xbf_1 + \Fbf\xbf_2$. The sparsity of both $\xbf_1$ and $\xbf_2$ is
equal to $4$. Again, the positions of the nonzero entries of both
$\xbf_1$ and $\xbf_2$ are selected uniformly at random, and each
nonzero value is sampled from standard Gaussian distribution.

Figure \ref{Fig3} (a) and (b) show that when $\Dbf$ is incoherent
(with coherence $\mu(\Dbf)=1/\sqrt{n}=0.0884$), the
$\ell_1$-synthesis approach not only recovers the signal $\fbf$ but
also the coefficient vector $\xbf$ accurately.

Figure \ref{Fig3} (b) also shows that $\ell_1$-analysis fails in
recovering the signal with a relative error at $0.8143$. Such a
failure is not surprising since $\bar{\Dbf}=\Dbf$ is ineffective in
sparsifying the true signal $\fbf$, see Figure \ref{Fig3} (c). By
contrast, $\tilde{\Dbf}_{\text{o}}^*\fbf$ decays very quickly, which
guarantees the good recovery for the signal by $\ell_1$-synthesis.

 \begin{figure}
 \begin{center}
 \begin{tabular}{c}
 \includegraphics[height=6.5cm]
 {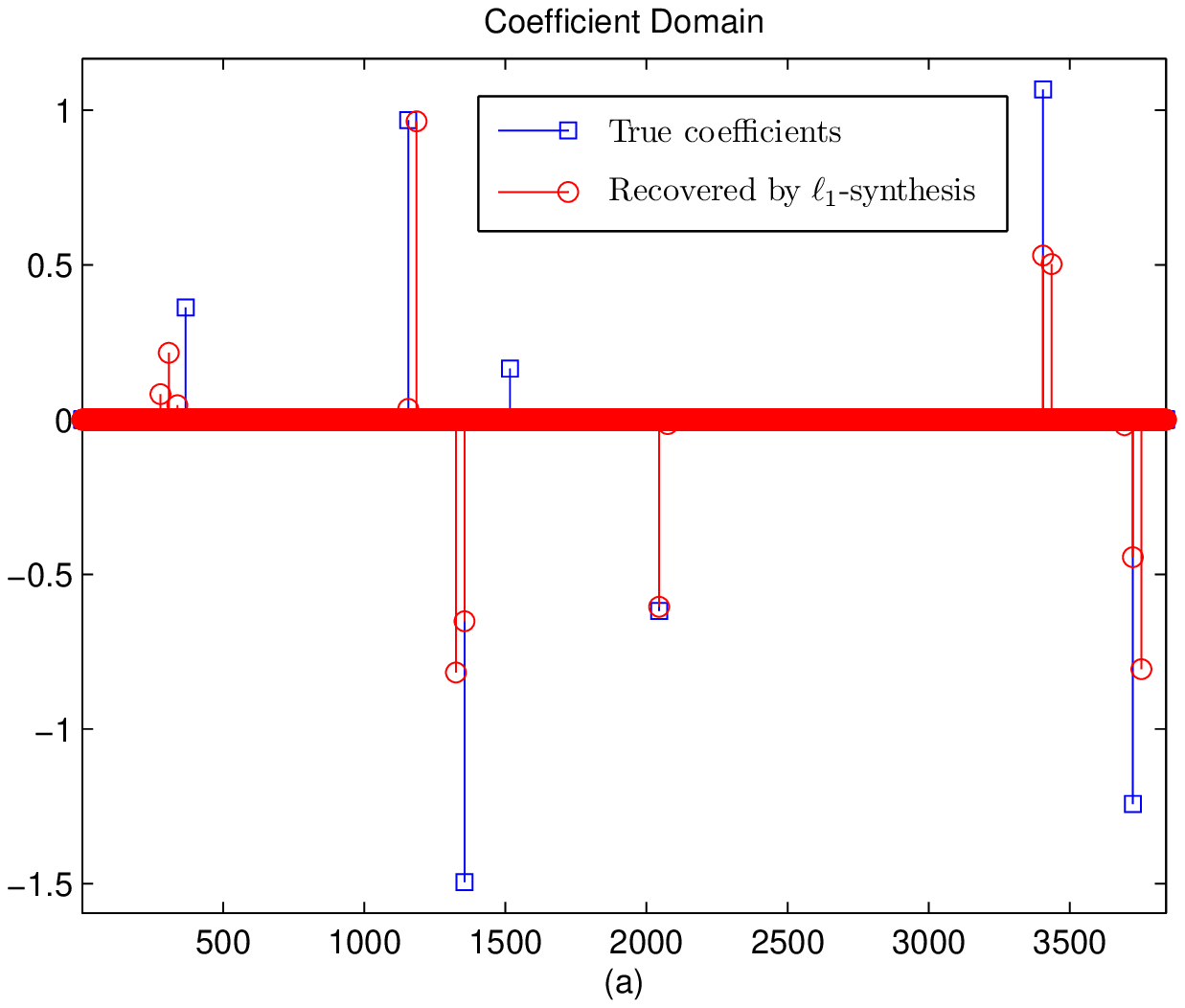}\\
 \includegraphics[height=6.5cm]
 {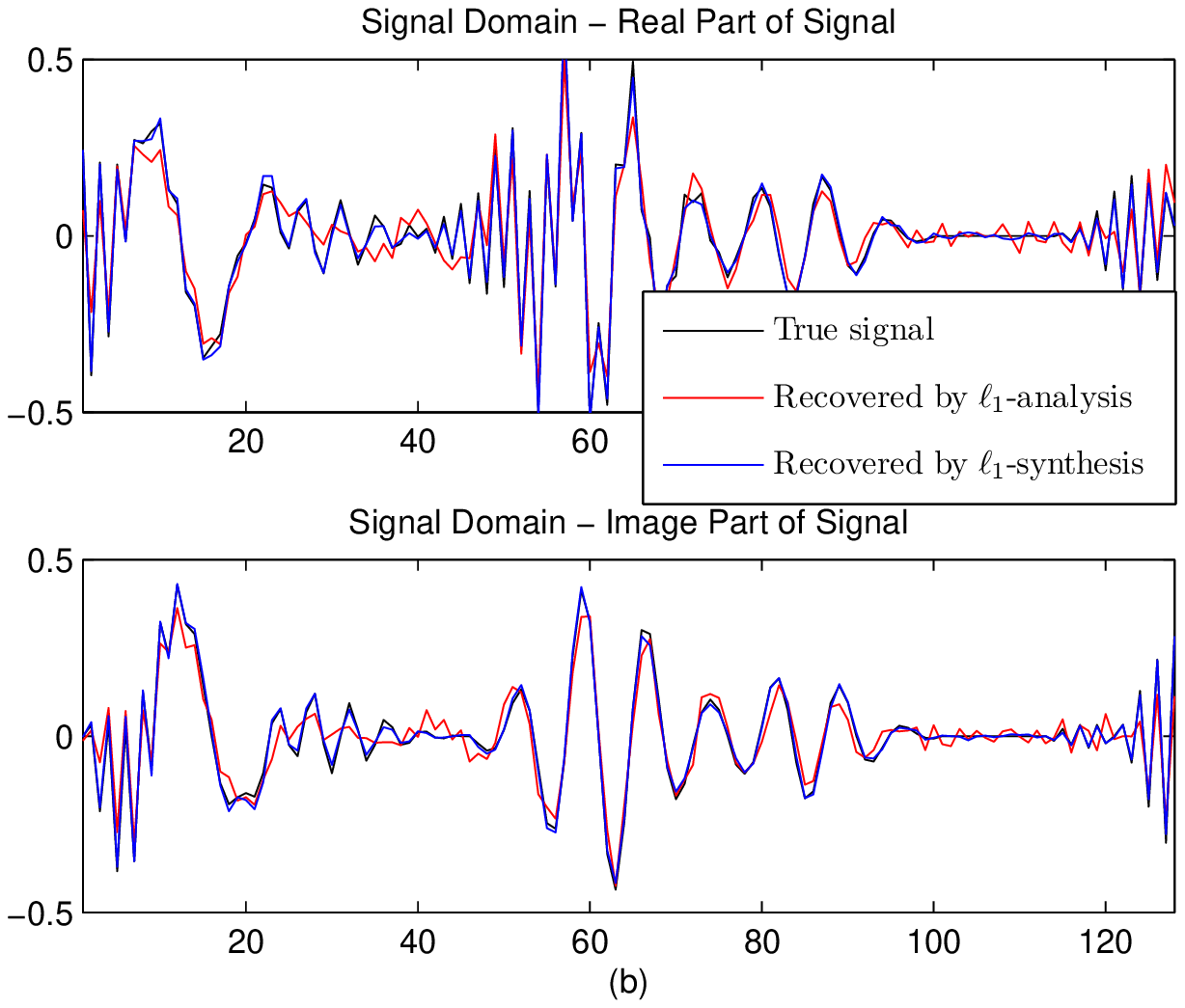}\\
 \includegraphics[height=6.5cm]
 {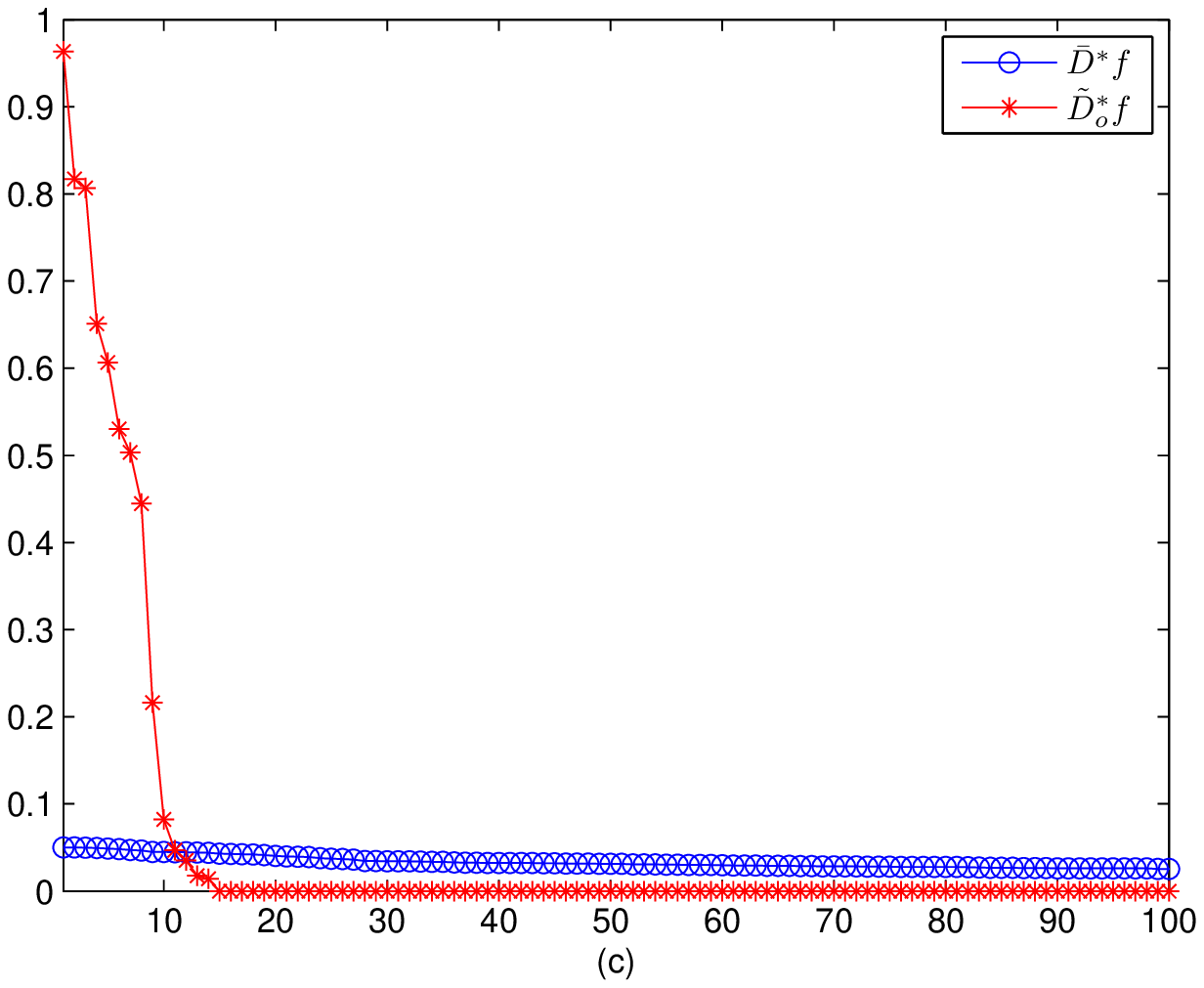}
 \end{tabular}
 \end{center}
 \caption{$\Dbf = $ Gabor frame. (a): recovery in coefficient domain by $\ell_1$-synthesis (relative error: $0.9039$) with the relative error defined as $\|\hat{\xbf}-\xbf\|_2/\|\xbf\|_2$.
 (b): recovery in signal domain by $\ell_1$-analysis (relative error: $0.3445$)
 and $\ell_1$-synthesis (relative error: $0.0845$) with the relative error defined as $\|\hat{\fbf}-\fbf\|_2/\|\fbf\|_2$.
 (c): The largest $100$ coefficients of the coefficient vector $\bar{\Dbf}^*\fbf$ and $\tilde{\Dbf}_{\text{o}}^*\fbf$ in magnitude.}
 \label{Fig2}
 \end{figure}

 \begin{figure}
 \begin{center}
 \begin{tabular}{c}
 \includegraphics[height=6.5cm]
 {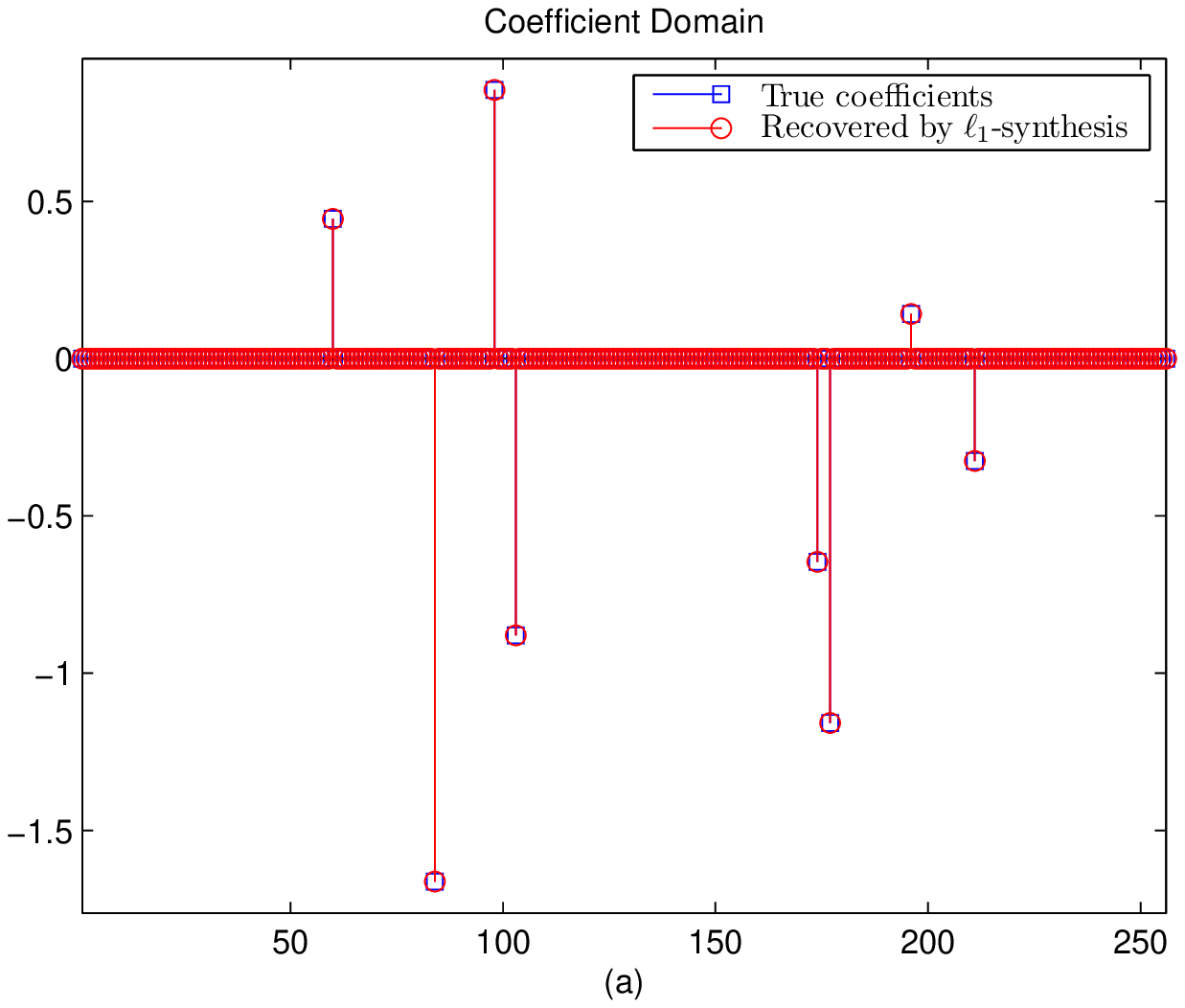}\\
 \includegraphics[height=6.5cm]
 {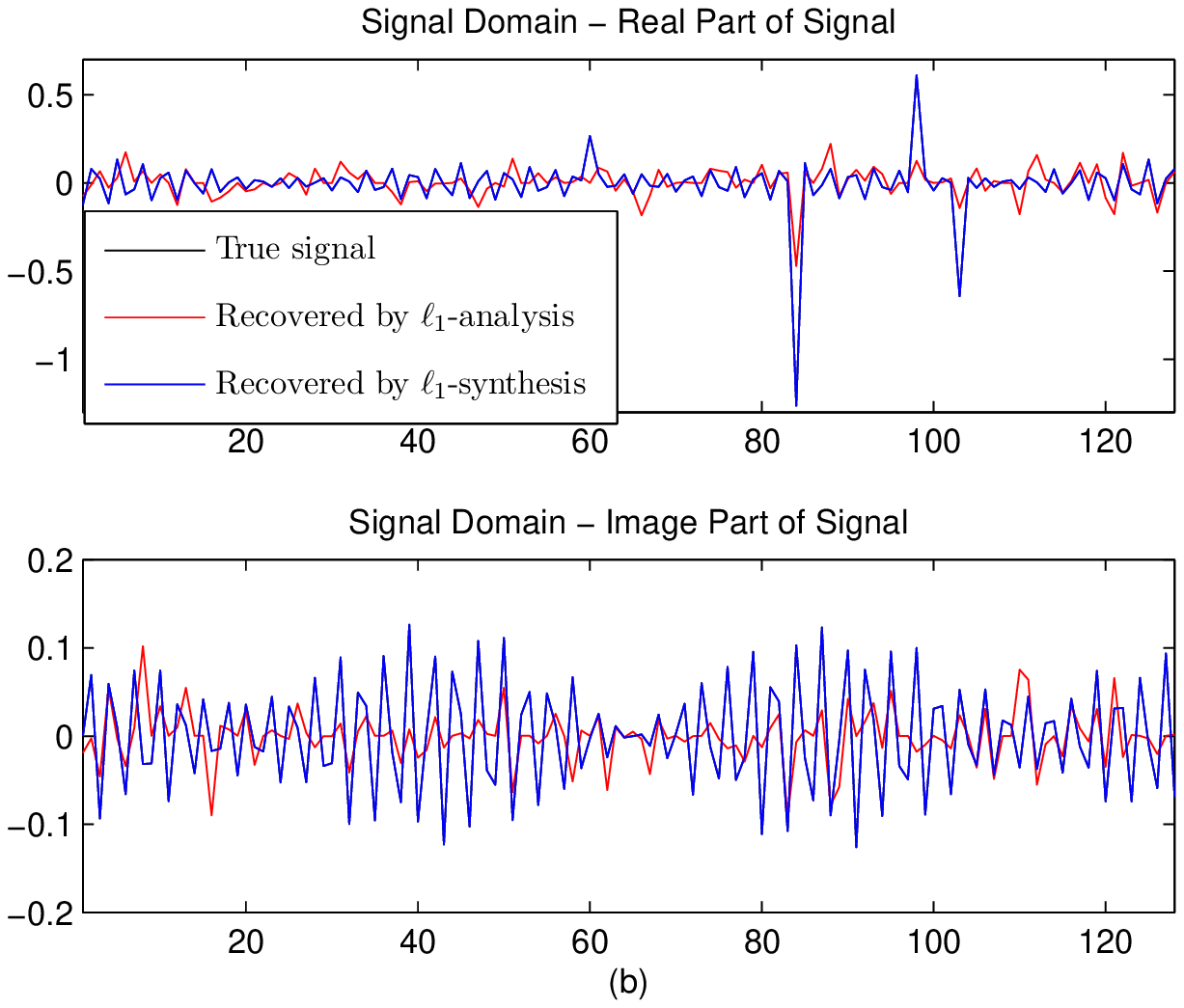}\\
 \includegraphics[height=6.5cm]
 {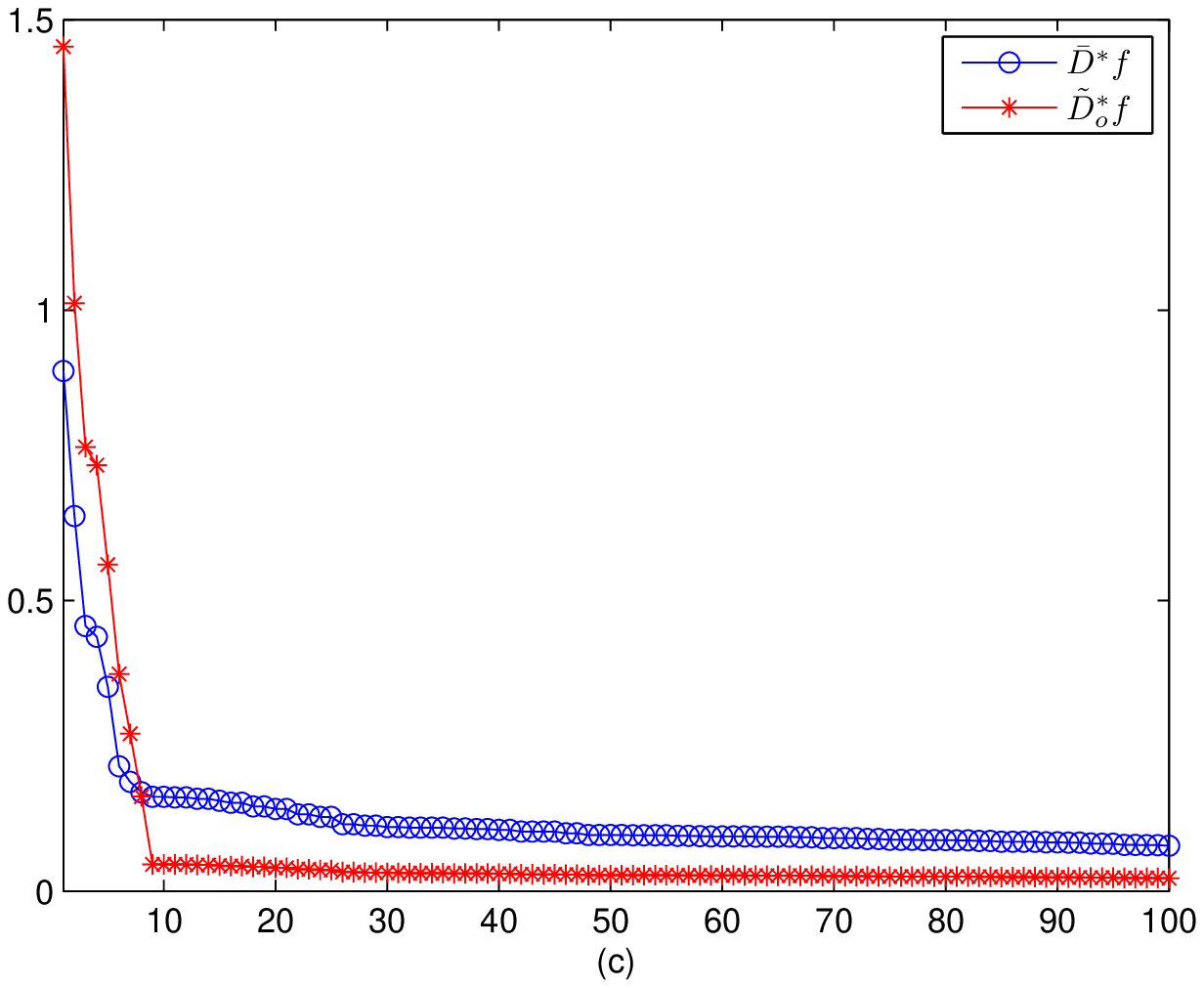}
 \end{tabular}
 \end{center}
 \caption{$\Dbf = [\Ibf, \Fbf]/\sqrt{2}$. (a): recovery in coefficient domain by $\ell_1$-synthesis (relative error: $7.7681\times 10^{-6}$) with the relative error defined as $\|\hat{\xbf}-\xbf\|_2/\|\xbf\|_2$.
 (b): recovery in signal domain by $\ell_1$-analysis (relative error: $0.8143$)
 and $\ell_1$-synthesis (relative error: $6.7911\times 10^{-6}$) with the relative error defined as $\|\hat{\fbf}-\fbf\|_2/\|\fbf\|_2$.
 (c): The largest $100$ coefficients of the coefficient vector $\bar{\Dbf}^*\fbf$ and $\tilde{\Dbf}_{\text{o}}^*\fbf$ in magnitude.}
 \label{Fig3}
 \end{figure}

%%%%%%%%%%%%%%%%%%%%%%%%%%%%%%%%%
\section{Conclusions} \label{section5}
%%%%%%%%%%%%%%%%%%%%%%%%%%%%%%%%%
This paper has presented a novel performance analysis for
$\ell_1$-synthesis in which the dictionary may be highly coherent.
Our approach was to show the equivalence between $\ell_1$-synthesis
and optimal-dual-based $\ell_1$-analysis. With this equivalence, the
signal recovery error bound for both could be established by using
the results in \cite{Liu2011}. Finally, the results obtained in this
paper were validated via numerical experiments.

\appendix
\section{Split Bregman Iteration for optimal-dual-based
$\ell_1$-analysis} \label{Appendix1} This appendix includes the
split Bregman iteration for optimal-dual-based $\ell_1$-analysis in
which
\begin{itemize}
  \item $\fbf$: the recovered signal;
  \item $\xbf$: the recovered coefficient vector;
  \item $\Pbf\gbf$: the auxiliary
   variable used to construct the optimal dual frame of $\Dbf$;
  \item shrink($\cdot$): denotes the element-wise soft shrinkage
  operation;
  \item  $(\cdot)^{new}$: denotes either $(\cdot)^{k+1}$ if it is
  available or $(\cdot)^{k}$ otherwise.

\end{itemize}

\begin{algorithm} \label{algorithm1}
\caption{Split Bregman Iteration for optimal-dual-based
$\ell_1$-analysis}
  \textbf{Initialization:} {$\fbf^0=\zebf$, $\xbf^0=\bbf^0=\Pbf\gbf^0=\zebf$, $\cbf^0=\zebf$, $\mu>0, \lambda>0, nOuter, nInner,
  tol$}\;
  \textbf{Output:} {$\fbf$, $\xbf$, $\Pbf\gbf$}\;
  \While{$k < nOuter$ and $\|\Phibf \fbf^k - \ybf\|_2 > tol$}
  { \For {$n=1:nInner$}{
  $\fbf^{k+1} = (\mu\Phibf^*\Phibf + \lambda
  \bar{\Dbf}\bar{\Dbf}^{*})^{-1}[\mu \Phibf^*(\ybf -
  \cbf^k) + \lambda \bar{\Dbf} (\xbf^{new} - \Pbf \gbf^{new} -
  \bbf^{new})]$\;
  $\xbf^{k+1} = \text{shrink} (\bar{\Dbf}^{*}\fbf^{new}+\Pbf \gbf^{new} +
  \bbf^{new}, 1/\lambda)$\;
  $\Pbf\gbf^{k+1} = \Pbf(\xbf^{new} - \bar{\Dbf}^{*}\fbf^{new} -
  \bbf^{new})$\;
  $\bbf^{k+1}  = \bbf^{new} + (\bar{\Dbf}^{*}\fbf^{new} + \Pbf \gbf^{new} -
  \xbf^{new})$\;
  }
  $\cbf^{k+1} = \cbf^k + (\Phibf\fbf^{k+1} - \ybf)$\;
  Increase $k$\;
  }
\end{algorithm}

% 一个插入单个图的例子
% figure 1
% \begin{figure}
% \begin{center}
% \begin{tabular}{c}
% \includegraphics[height=8cm]
% {Name of Figure}
% \end{tabular}
% \end{center}
% \caption{Description for the Figure}
% \label{Fig1}
% \end{figure}

% 一个插入多个图的例子
% % figure 2
% \begin{figure}
% \begin{center}
% \begin{tabular}{cc}
% \includegraphics[height=5cm]
% {Name of Figure}&
% \includegraphics[height=5cm]
% {Name of Figure}\\
% (a) &  (b) \\
% \end{tabular}
% \end{center}
% \caption{Description of Figures}
% \label{Fig2}
% \end{figure}

%%一个插入表格的例子
%\begin{table}[htbp]
%%\vspace{0cm}
%\caption{Name of Table}
%\begin{center}
%\begin{tabular}{cccc}
%  \hline\hline
%   Item1 & Item2  & Item3  & Item4  \\
%  \hline\hline  %Excitation
%  20 21  & 0.252  & 0.828  & 10 31  \\
%  19 22  & 0.823  & 0.958  & 9 32   \\
%  18 23  & 1.429  & 0.944  & 8 33   \\
%  17 24  & 2.064  & 0.987  & 7 34   \\
%  \hline\hline
%\end{tabular}
%\end{center}
%%\vspace{-.5cm}
%\end{table}

\spacingset{1}
%\newpage
\bibliographystyle{ieeetr}
\bibliography{Sparse Recovery with Coherent and Redundant Dictionaries.bbl}
%\newpage
%\input{fig}

%\begin{thebibliography}{1}
%
%\bibitem{Haykin2002}
%S.~Haykin, {\em Adaptive Filter Theoty.} 4th ed. \newblock Upper
%Saddle River, NJ: Prentice Hall, 2002.
%
%\end{thebibliography}

\end{document}